\documentclass[12pt]{article}
\usepackage[utf8]{inputenc}
\usepackage{amsmath}
\usepackage{amsthm}
\usepackage{amssymb} % or amsfonts
\usepackage{multirow}
\usepackage{tikz}
\tikzstyle{bag} = [align=center]
\usetikzlibrary{arrows}
\usetikzlibrary{trees}
\usepackage{multirow,array}
\usepackage{hyperref}

\usepackage{geometry}
 \geometry{
 a4paper,
 total={170mm,257mm},
 left=20mm,
 top=20mm,
 }
 
%\newcounter{uno}[section]

\newtheorem{definition}{Definition}[section]
 
\newtheorem{theorem}[definition]{Theorem}
\newtheorem{corollary}{Corollary}[definition]
\newtheorem{lemma}[definition]{Lemma}

\title{Asymmetric Equilibria in Symmetric Multiplayer Prisoners' Dilemma Supergames} 
\author{Davidson Cheng\footnote{Colorado College, Colorado Springs, CO. Email:  d\_cheng@coloradocollege.edu.}}

\date{April 18, 2022}

\begin{document}

\maketitle

\begin{abstract}
    We propose a finite automaton-style solution concept for supergames. In our model, we define an equilibrium to be a cycle of state switches and a supergame to be an infinite walk on states of a finite stage game. We show that if the stage game is locally non-cooperative, and the utility function is monotonously decreasing as the number of defective agents increases, the symmetric multiagent prisoners' dilemma supergame must contain one symmetric equilibrium and can contain asymmetric equilibria. 
\end{abstract}

\section{Introduction}

The prisoners' dilemma is one of the most-studied topics in game theory as it abstracts many real-life game scenarios. At the same time, the prisoner's dilemma delivers the pessimistic message that cooperation is difficult to reach with selfish, rational agents. Since cooperation is commonplace in reality, this indicates that the original prisoner's dilemma might not be a sufficient model for cooperation. Interestingly, when the prisoner's dilemma is repeated, some unexpected behaviors begin to occur. In 1980, Robert Axelrod created a ``prisoner's dilemma tournament" where he invited game theory scholars at the time to participate \cite{Axelrod1981}. The tournament consists of several hundred rounds, where the invited players created proxies with computer programs and are allowed to draw on past results while making decisions. Surprisingly, the winner was a simple strategy named ``tit for tat" that played cooperation until the other player defects. What was equally surprising was that cooperation between the players was largely common during this tournament. Axelrod's experiment revealed that repetition of a single-stage non-cooperative game may promote cooperation. Repeated games have since then been regarded as a more suitable medium for modeling evolution and cooperation. Theoretical results from studying repeated games have been applied to explain phenomenons of cooperation in evolution and human interactions \cite{Fudenberg_1990}. 

A symmetric game is one where the payoff function is indiscriminate toward the players. Similarly, a symmetric equilibrium is one such that not every agent adopts the same strategy. It was shown recently that despite every finite symmetric game must contain at least one symmetric nash equilibrium \cite{Nash1951}, its equilibria are not all required to be symmetric \cite{FEY2012424} \cite{Xefteris_Zero_Sum}. We then ask the question of whether repeated symmetric games have only symmetric equilibrium. The answer to this question can help us understand many real world phenomenons where heterogeneous actions are observed under a non-discriminatory policy. Additionally, we might gain insight into whether asymmetric strategies are purely results of individual agent differences, or whether they are produced partially by the policy as well. For example, when students are assigned a group project, it's often observed that a few students in a group end up doing most of the work while the rest become freeriders. Obviously, students who care more about their grade will put in more effort, but the question we are interested in is whether the policy (grading rubric in this case) contributes to the differences in student efforts.

In this paper, we introduce a solution concept for supergames that is similar to the finite state automaton model. We then show that under reasonable assumptions, the set of equilibria for a multi-agent prisoners' dilemma supergame must contain one symmetric equilibrium and can contain asymmetric equilibria. In section 2, we provide a brief introduction to repeated games and define our model. Then, we describe our solution concept and prove our main theorem by characterizing the equilibria of two examples.

\section{Model of Repeated Games}

There are many scenarios where agents need to take past and future interactions between them and other agents into consideration when making a decision. Repeated games are often used to capture this notion. It is well-known that equilibrium can change when games are repeated \cite{Fudenberg_1990}, but we still don't understand the dynamics of these changes. 
 
 Existing studies on repeated games can be divided into two categories: finitely repeated and infinitely repeated. As the names suggest, the finitely repeated games are defined on a finite time horizon while the infinitely repeated games are played over an infinite horizon. At first glance, finitely repeated games may be better justified as a realistic model, since infinite life span is impractical. However, finitely repeated games often do not reach cooperation when expected to \cite{benoit1984finitely}. On the other hand, infinitely repeated games often contain a larger set of equilibrium than its finitely repeated counterparts. In fact, the folk theorem states that every individually rational and feasible payoff vector is a Nash equilibrium in an infinitely repeated game without discounting. It has been noticed that infinitely repeated games produce different dynamics compared to stage games and finitely repeated games \cite{RUBINSTEIN19791} \cite{benoit1984finitely}, and may be a more suitable approach to study the arise of cooperation in the long run. 

We will define a class of symmetric repeated games by first specifying a class of symmetric multiplayer stage games. The stage game can take on any arbitrary number of players, and we restrict players' actions to ``cooperate" and ``defect." This class of games is also known as multiplayer prisoners' dilemma. Let 

\[G=<\{0,1\},S_n, u>\] 

be a symmetric stage games involving $n \ge 2$ players where each player has the pure strategy profile of 1 (cooperate) and 0 (defect). The set $S_n$ contains all possible states of the game whereas a state is defined as the collection of all agents' chosen pure strategies. The payoff for each player is determined by $u: \{0,1\} \times S_n \mapsto \mathbb{R}$. Given the symmetric property of $G$, we can specify any state with an integer that counts the number of the ``defective" agents. We will notate a state $s \in S_n$ with $[j]$ for some $j \in \{0,1,...,n\}$ whereas $j$ is the number of defective agents in the state $s$. Let $m_i^k \in \{0,1\}$ and $s^k \in S_n$ denote the $k$-th round strategy and state, respectively. When the game $G$ is repeated, the payoff of player $i$ at some round $k$ can be computed by invoking $u(m_i^k,s^k)$. The cumulative payoff of the $i$-th player until the $k$-th round can be given as

% With some discount factor $0 < d \le 1$, the cumulative payoff of the $i$-th player until the $k$-th round can be given as 

\[\sum_{h = 0}^{k} u(m_i^h,s^h).\]

% When a one-shot game is repeated, we allow the players to make decisions based on past actions and to take future payoff into consideration, thus we expect different behaviors. However, when some games are finitely repeated, the equilibrium do not change at all [Finitely repeated games, Jean Piere Benoit]. This certainly deviates from observed rational human behaviors and theoretical results, this misalignment is referred to as the chain store paradox, which questions whether finitely repeated games is valid for modeling cooperation in repeated games. 

% When the discount factor is less than one, the overall utility is a converging series which means for any $\epsilon \in \Bbb{R}^+$, we can find a finite $k$ satisfying 

% \[\bigg | \sum_{h = 0}^{\infty} d^h u(m_i^h,s^h) - \sum_{h = 0}^{k} d^h u(m_i^h,s^h) \bigg |  < \epsilon.\]

% This implies that if we are willing to ignore some arbitrarily small error term, an infinitely repeated game with discounting is similar to a finitely repeated game. 

We will focus specifically on infinitely repeated games without discounting, this class of games is also known as ``supergames," denoted with $G^*$. Because of the infinite horizon in supergames, cumulative utility is no longer a valid approach for measuring payoff of a strategy. Instead, the ``limit of the means" is commonly adopted for payoff in supergames \cite{aumann1994long}, that is 

% The folk theorems indicate that the number of equilibriums often increases when turning a stage game into a supergame [benoit, finitely repeated games]. In fact, one of the Folk Theorems state that every individually rational and feasible payoff vector is a Nash equilibrium payoff vector in the supergame . In a multiplayer prisoners' dilemma supergame, the set of supergame equilibriums includes the symmetric cooperate strategy, and is evolutionary stable when the grim trigger is applied. It has been noticed that supergames produce different dynamics compared to stage games and finitely repeated games [amir, ``overtaking criterion paper"] [benoit, ``finitely repeated games"], and may be a more suitable approach to study cooperation vs. defection in an evolutionary model.

\[\text{payoff}_i = \lim_{k \rightarrow \infty} \frac{1}{k}\sum_{h = 0}^{k} u(m_i^h,s^h).\]

% These two classes of repeated games both transform a normal-form game into an extensive form game, which has been studied extensively and have known solution concepts. In reality, it is often the case that agents do not know how many times they will be interacting with others. It has been mentioned that finitely repeated games do not adequately express the essence of repeated games [Rubinstein, equilibriums in supergames with the overtaking criterion, 1979]. In addition, the solution to infinitely repeated games is non-trivial compared to finitely repeated games.

% When games are repeated infinitely, the Folk Theorems indicates that the number of equilibrium often increases. In fact, one of the Folk Theorems states that if the game is repeated infinitely,  

% When the game is finitely repeated, it can be reduced to an extensive form game, which is then solvable using backward induction. In the case when $G$ is infinitely repeated (otherwise known as supergames), if the discount factor is less than one, the overall payoff is a converging series which means for any $\epsilon \in \Bbb{R}$, we can find a finite $k$ satisfying 

% In a game of infinite horizon, a non-myopic agent with utility-maximizing objective should take future payoffs into consideration without discounting, thus the dicount factor is set to 1. 

In stage games, to check whether a strategy vector is a nash equilibrium, it suffices to check if at least one agent can by themself improve their payoff. This method is not sufficient for supergames. In supergames, finding equilibria involves deriving the equilibrium that a state will eventually reach. Let's say the state $s_e$ is an equilibrium, and the game takes $k$ rounds to transition from its initial state to $s_e$, then the limit of the means payoff over the infinite horizon can be given as

\[\lim_{r \rightarrow \infty} \frac{1}{r} \bigg ( \sum_{h=0}^{k}u(m^h,s^h) + \sum_{i = k+1}^{r} u(0,s_e) \bigg ) = u(0,s_e) \text{ for cooperative players at $s_e$},\]

\[\lim_{r \rightarrow \infty} \frac{1}{r} \bigg ( \sum_{h=0}^{k}u(m^h,s^h) + \sum_{i = k+1}^{r} u(1,s_e) \bigg ) = u(1,s_e) \text{ for defective players at $s_e$}.\]

Notice that this expression disregards the finite time-span prior to reaching the equilibrium, which is called the overtaking period. Rubinstein \cite{RUBINSTEIN19791} has argued that despite two sequences of game dynamics may reach the same equilibrium, the overtaking period can allow one to dominate the other. Here, we stick to the limiting means payoff for simplicity sake, as such difference is marginal over infinite horizon.

The above expressions indicate that the objective for rational players in $G^*$ is to let the game reach an equilibrium state that maximizes their payoff, which can be a complicated recursive process. In the next section, we will introduce our solution concept to alleviate this complexity.

\section{Emergence of Asymmetric Equilibrium}

In the previous section, we have identified that an agent in $G^*$ aims to induce the game into an equilibrium state that favors them. When numerous agents like such are interacting together, it may not be straightforward to determine exactly what equilibrium state the game will eventually reach. We now introduce a few concepts to tackle this issue. Our solution concept consider the stage game $G$ to be a finite-state machine, and the supergame $G^*$ as a infinite walk on such machine. The dynamics of $G^*$ can then be described with a finite sequence of state switches that enter an infinite cycle after reaching equilibrium.

\subsection{Leading and Chaining}

\begin{definition}
    For some $s_i,s_j \in S_n$, we use $s_i \rightarrow s_j$ to denote ``$s_i$ leads to $s_j$", which means one and only one agent in $s_i$ is able to by themself improve their limit of the means payoff through inducing a state switch from $s_i$ to $s_j$, or $s_i = s_j$ and $s_i$ is an equilibrium. 
\end{definition}

In other words, if we end up with the state $s_i$ in some round $k$, and $s_i \rightarrow s_j$, we can expect the game to evolve into $s_j$ in round $k+1$. We find the ``one and only one" restriction necessary to develop the rest of our analysis. Since the game is symmetric, if one cooperative/defective agent in a state $s$ is able to improve their overall utility by some strategy switch, every other cooperative/defective agent are able to as well. So unless we introduce a policy (either probabilistic or asymmetrical) to determine how many of these agents we allow to switch, the game will immediately enter a symmetric state, which makes the supergame trivial. One way to interpert the concept of leading is that at every round, we are picking an agent from a uniform probability distribution and ask them if they would like to switch their current strategy.

\begin{corollary}
    If $[b] \rightarrow [c]$, then $b-1 \le c \le b+1$.
\end{corollary}

\begin{definition}
    For $s_i, s_j \in S_n$, $s_i \Rightarrow s_j$ denotes ``$s_i$ is chained to $s_j$", which means  for some $s_i, s_l, s_m, ... s_o, s_j \in S_n$, $s_i \rightarrow s_l$, $s_l \rightarrow s_m$, ..., $s_o \rightarrow s_j$. 
\end{definition}

Consider $G^*$ to be a directed muiltigraph on $S_n$, then $s_i \Rightarrow s_j$ means there exists a path from $s_i$ to $s_j$. In other words, if the game reaches $s_i$, we expect the game to evolve into $s_j$ within a finite number of subsequent rounds. 

\begin{lemma}
    For $[b], [c] \in S_n$, if  $[b] \Rightarrow [c]$ and $c-1 \le b \le c+1$, then $[b] \rightarrow [c]$.
\end{lemma}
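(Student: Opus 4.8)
The plan is to exploit the one-dimensional structure of the state space. Since every state $[b]$ is indexed by an integer $b \in \{0,1,\dots,n\}$ and, by the preceding Corollary, every single leading step changes this index by at most one, any chain $[b] \Rightarrow [c]$ is simply a walk on the line $0,1,\dots,n$ whose steps are ``up'' ($+1$), ``down'' ($-1$), or ``stay'' ($0$). A stay-step can occur only at an equilibrium, because by Definition a self-transition $[a]\rightarrow[a]$ requires $[a]$ to be an equilibrium. I would first record the key structural fact that across each adjacent pair the leading relation is one-directional.

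Concretely, across the boundary separating $[b]$ from $[b+1]$ there are only two conceivable edges: the up-edge $[b]\rightarrow[b+1]$, which exists exactly when a cooperator strictly gains by defecting, i.e.\ $u(0,[b+1]) > u(1,[b])$; and the down-edge $[b+1]\rightarrow[b]$, which exists exactly when a defector strictly gains by cooperating, i.e.\ $u(1,[b]) > u(0,[b+1])$. Because these two conditions compare the same pair of numbers $u(0,[b+1])$ and $u(1,[b])$ in opposite directions, at most one of them can hold, so each boundary admits a directed edge in at most one direction. This observation uses neither the monotonicity nor the local non-cooperation hypothesis and holds for any payoff function $u$.

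With one-directionality in hand I would split into the three cases permitted by $c-1 \le b \le c+1$. For $c = b+1$ I would run a crossing (parity) argument on the boundary between $[b]$ and $[b+1]$: the walk realizing $[b]\Rightarrow[c]$ starts at $[b]$, on the $\{0,\dots,b\}$ side, and ends at $[b+1]$, on the $\{b+1,\dots,n\}$ side, so it must traverse that boundary upward at least once; since a single step changes the index by at most one, the only upward traversal is the edge $[b]\rightarrow[b+1]$, which therefore exists and yields $[b]\rightarrow[c]$. The case $c=b-1$ is symmetric, using the boundary between $[b-1]$ and $[b]$ and a downward crossing. For $c=b$ I would argue by contradiction: if $[b]$ were not an equilibrium it would have no self-loop, so the first step of the chain would be up or down, placing the walk on one side of a one-directional boundary with no edge to return, making a return to $[b]$ impossible; since the chain does return to $[b]$, the state $[b]$ must be an equilibrium, which is precisely $[b]\rightarrow[b]$.

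I expect the main obstacle to be the $c=b$ case, where the desired conclusion is an equilibrium self-loop rather than a genuine move, and where one must rule out nontrivial excursions that leave $[b]$ and come back; the one-directionality fact is exactly what forecloses such excursions. A secondary care point is the bookkeeping at the extreme indices $b=0$ and $b=n$, where one of the two boundaries is absent and the corresponding improving move is vacuously unavailable, but these degenerate uniformly into the same argument.
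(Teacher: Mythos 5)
Your crossing argument for the cases $c=b\pm 1$ is sound and is in fact a more careful rendering of the paper's own (one-line) proof, which simply observes that the agent who improves by chaining to $[c]$ can make that move in a single round when $[c]$ is adjacent to $[b]$. But your ``one-directionality'' lemma has a genuine gap, and it is exactly the load-bearing step in your $c=b$ case. You characterize the edge $[b]\rightarrow[b+1]$ as existing ``exactly when'' $u(0,[b+1])>u(1,[b])$ (the indices here are also swapped --- a cooperator at $[b]$ earns $u(0,[b])$ and, after defecting, $u(1,[b+1])$ --- but that is the smaller issue). The real problem is that leading is defined in terms of the \emph{limit of the means} payoff, i.e.\ the payoff at the equilibrium state the game is eventually chained to, not the one-shot stage payoff at the adjacent state. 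Consequently the condition for the up-edge and the condition for the down-edge are \emph{not} opposite comparisons of the same pair of numbers, and your claim that each boundary is one-directional ``for any payoff function $u$, using neither monotonicity nor local non-cooperation'' does not follow. The paper itself only rules out two-state cycles much later (in the corollary to Theorem 3.7), using both extra hypotheses and a different argument: if $[b]\rightarrow[b+1]$ and $[b+1]\rightarrow[b]$ both held, the two states would be chained to the same cyclic equilibrium payoff vector, contradicting the strict improvement required by leading. At the point where the present lemma sits, none of that machinery is available.

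Because of this, your treatment of the $c=b$ subcase --- where you need to foreclose an excursion that leaves $[b]$ and returns across a supposedly one-way boundary --- is not established. If you want to keep your structure, you should either prove the no-two-cycle fact from the limit-of-the-means definition directly (essentially importing the argument of the paper's Corollary 3.7.1), or handle $c=b$ the way the paper implicitly does: $[b]\Rightarrow[b]$ means some agent improves their equilibrium payoff by a sequence of switches returning to $[b]$, and since the relevant payoff is the equilibrium payoff of the state ultimately reached, that improvement is realized at $[b]$ itself, i.e.\ $[b]$ is (or leads to) its own equilibrium. The adjacent cases of your proof can stand as written once the spurious lemma is removed, since the crossing argument there never actually uses one-directionality.
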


\begin{proof}
    If $[b] \Rightarrow [c]$ then it is implied that one agent is able to improve their overall utility by inducing a sequence of state switches and eventually reach $[c]$. Given $c-1 \le b \le c+1$, such transition can be done within a single round, thus $[b] \rightarrow [c]$.
\end{proof}

\begin{lemma}
\label{monotone lemma}
    For $[b], [c] \in S_n$, if  $[b] \Rightarrow [c]$ and WLOG $c > b$, then for any $d$ satisfying $b \le d < c$, $[d] \rightarrow [d + 1]$
\end{lemma}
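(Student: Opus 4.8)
The plan is to unpack the hypothesis $[b]\Rightarrow[c]$ into a concrete walk on states and then run a discrete intermediate-value (first-crossing) argument. By the definition of chaining, $[b]\Rightarrow[c]$ supplies a finite sequence $[w_0],[w_1],\dots,[w_T]$ with $[w_0]=[b]$, $[w_T]=[c]$, and $[w_i]\rightarrow[w_{i+1}]$ for every $0\le i<T$. The only structural fact I need about this walk is that each single step moves the defector count by at most one, which is precisely the Corollary stating that $[b]\rightarrow[c]$ implies $b-1\le c\le b+1$; applied edgewise it gives $|w_{i+1}-w_i|\le 1$. The lemma then reduces to the elementary observation that a path on the integers taking steps of size at most one cannot skip over an intermediate level, and so must traverse the specific edge $[d]\rightarrow[d+1]$ while climbing from $b$ to $c$.

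Carrying this out, I would fix $d$ with $b\le d<c$ and let $j$ be the least index with $w_j\ge d+1$. This $j$ exists and is at least $1$, since $w_0=b\le d$ while $w_T=c>d$. Minimality gives $w_{j-1}\le d$, and the step bound yields $w_j\le w_{j-1}+1\le d+1$; together with $w_j\ge d+1$ this forces $w_j=d+1$ and hence $w_{j-1}=d$. Therefore $[d]\rightarrow[d+1]$ appears as the consecutive pair $[w_{j-1}]\rightarrow[w_j]$ of the witnessing walk. Because $\rightarrow$ is a fixed relation between two states (per the definition of leading), exhibiting it once inside a valid chain establishes it, so $[d]\rightarrow[d+1]$ holds; as $d$ was an arbitrary element of $\{b,\dots,c-1\}$, the lemma follows.

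The closest thing to an obstacle is making sure the uniform step bound covers \emph{every} edge, including a possible self-loop $[v]\rightarrow[v]$ at an equilibrium state. This causes no trouble: a self-loop contributes change $0$, which still satisfies $|w_{i+1}-w_i|\le1$, so the first-crossing index $j$ remains well defined. One can even rule out intermediate self-loops entirely, since an equilibrium leads only to itself and the walk would then be unable to continue on to $[c]$. As a cosmetic alternative I note that one could instead extract a sub-walk realizing $[d]\Rightarrow[d+1]$ and promote it via the preceding lemma, using the adjacency $|d-(d+1)|=1$; but reading the edge $[d]\rightarrow[d+1]$ straight off the crossing is shorter and needs no such appeal.
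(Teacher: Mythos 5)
Your proof is correct and rests on the same underlying idea as the paper's: a chain from $[b]$ to $[c]$ moves the defector count by at most one per step (Corollary 3.2), so it must traverse the edge $[d]\rightarrow[d+1]$ for every intermediate level $d$. The paper states this as a terse contrapositive, whereas your first-crossing argument fills in the details it leaves implicit; no substantive difference in approach.
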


\begin{proof}
    Suppose for some $d$ satisfies $b \le d < c$, $[d] \not \rightarrow [d + 1]$. Then for any state $[f]$ such that $[b] \Rightarrow [f]$, it is implied that $[f] \not \Rightarrow [c]$, which means $[b] \not \Rightarrow [c]$.
\end{proof}

\begin{corollary}
    WLOG, if $b \le c$ and $[b] \Rightarrow [c]$, then for any $d$ satisfying $b < d \le c$, $[b] \Rightarrow [d]$.
\end{corollary}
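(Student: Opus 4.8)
The plan is to reduce the statement almost entirely to Lemma \ref{monotone lemma}, which already hands us a chain of single-step leads along the whole interval from $[b]$ to $[c]$. First I would dispose of the trivial boundary case: if $b = c$ then the index range $b < d \le c$ is empty and there is nothing to prove, so I assume $b < c$. Under this assumption the hypotheses of Lemma \ref{monotone lemma} are satisfied (we have $[b] \Rightarrow [c]$ with $c > b$), so the lemma guarantees that $[e] \rightarrow [e+1]$ holds for every index $e$ with $b \le e < c$. In other words, the entire monotone sequence of consecutive leads $[b] \rightarrow [b+1] \rightarrow \cdots \rightarrow [c]$ exists.

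Next I would fix an arbitrary $d$ with $b < d \le c$ and exhibit the prefix path $[b] \rightarrow [b+1] \rightarrow \cdots \rightarrow [d]$. Each step of this prefix is a lead of the form $[e] \rightarrow [e+1]$ with $b \le e < d$; since $d \le c$ forces $e < c$, every such lead is one of the links already supplied by Lemma \ref{monotone lemma}. This finite sequence of leads is exactly what the definition of chaining requires, so $[b] \Rightarrow [d]$ follows immediately. Because $d$ was arbitrary in the stated range, the corollary is established.

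The step I expect to be the only point requiring any care is the bookkeeping on indices, namely verifying that $d \le c$ is precisely the condition needed so that the prefix $[b] \rightarrow \cdots \rightarrow [d]$ uses only links guaranteed by the lemma; had $d$ exceeded $c$, the final lead $[d-1] \rightarrow [d]$ would fall outside the range covered by Lemma \ref{monotone lemma}. Beyond this, the argument is a direct specialization of the monotone-chain structure, and the WLOG clause simply fixes the orientation $b \le c$, mirroring the analogous decreasing case. I would therefore present the proof as a short deduction citing Lemma \ref{monotone lemma} and the definition of $\Rightarrow$, rather than as an independent induction.
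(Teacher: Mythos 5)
Your proof is correct and follows exactly the route the paper intends: the corollary is stated without proof as an immediate consequence of Lemma \ref{monotone lemma}, and your argument—extracting the chain of consecutive leads $[b] \rightarrow [b+1] \rightarrow \cdots \rightarrow [c]$ from that lemma and truncating it at $[d]$ to witness $[b] \Rightarrow [d]$—is precisely the deduction the paper leaves implicit. The index bookkeeping and the handling of the empty case $b = c$ are both fine.
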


Lemma 3.4 and Corollary 3.4.1 indicates that in $G^*$, the sequence of state switches from some state to its chained equilibrium state will contain a monotone chain reaction, like depicted in the following diagram. 
\vspace{0.5cm}
\begin{figure}[h]
\centering
    \begin{tikzpicture}
  [scale=.54,auto=left,every node/.style={draw, circle, thin, minimum size=1.3cm}]
  
  \tikzset{edge/.style = {->,> = latex'}}
  \node[bag] (n1) at (17,0) {$[b]$};
  \node[bag] (n2) at (14,0)  {$[b-1]$};
  \node[bag, draw = none] (n3) at (11,0) {. . .};
  \node[bag] (n4) at (8,0) {$[e+1]$};
  \node[bag] (n5) at (5,0) {$[e]$};
  
  \node[bag, draw = none] (n6) at (-1,0) {$[b] \Rightarrow [e] $ \\ and $[e] < [b] $};
  \node[bag, draw = none] (n7) at (1,0) {:};
  
%   \node at (4.5,4) [draw=none] 
%     {...};
  
  \draw[edge] (n1)  to[bend right] (n2);
  \draw[edge] (n2)  to[bend right] (n3);
  \draw[edge] (n3)  to[bend right] (n4);
  \draw[edge] (n4)  to[bend right] (n5);
  \draw[edge] (n5)  to[loop left] (n5);

\end{tikzpicture}
\end{figure}

\begin{figure}[h]
\centering
    \begin{tikzpicture}
  [scale=.54,auto=left,every node/.style={draw, circle, thin, minimum size=1.3cm}]
  
  \tikzset{edge/.style = {->,> = latex'}}
  \node[bag] (n1) at (17,0) {$[b]$};
  \node[bag] (n2) at (14,0)  {$[b+1]$};
  \node[bag, draw = none] (n3) at (11,0) {. . .};
  \node[bag] (n4) at (8,0) {$[e-1]$};
  \node[bag] (n5) at (5,0) {$[e]$};
  
  \node[bag, draw = none] (n6) at (-1,0) {$[b] \Rightarrow [e] $ \\ and $[e] > [b] $};
  \node[bag, draw = none] (n7) at (1,0) {:};
  
%   \node at (4.5,4) [draw=none] 
%     {...};
  
  \draw[edge] (n1)  to[bend right] (n2);
  \draw[edge] (n2)  to[bend right] (n3);
  \draw[edge] (n3)  to[bend right] (n4);
  \draw[edge] (n4)  to[bend right] (n5);
  \draw[edge] (n5)  to[loop left] (n5);

\end{tikzpicture}
\end{figure}

\subsection{Game Properties}

We now introduce two properties that the multiplayer prisoners' dilemma can take on. 

% We also briefly discuss why it is reasonable to for games to have these properties. 

\begin{definition}
    The game $G$ is locally non-cooperative if 
    
    \begin{itemize}
        \item for any state with utility defined for both defective and cooperative agents, defective agents have higher payoff.
        \item for any $b \in \{0,1,2,...,n-1\}$, 
    
    \[u(1,[b+1]) > u(0,[b]).\]
    \end{itemize}   

\end{definition}

In the two symmetric states of $G^*$, namely $[0]$ and $[n]$, only one of the utility is specified (there is no cooperative agent at $[n]$ and no defective agent at $[0]$), therefore we excluded those two states from the above definition. Given a locally non-cooperative supergame, it is implied that myopic agents would always prefer defect. However, as we will show in later sections, this is not the case for farsighted, rational agents. 

\begin{definition}
    The supergame $G^*$ has monotonously decreasing utility function if for any $s$, $s' \in S_n$ such that $s' > s$, the utility function satisfy
    
    \[u(0,s') < u(0,s), \text{   and}\]
    
    \[u(1,s') < u(1,s).\]
\end{definition}

Here, we define defective to be the higher ordered action and states with more defective agents to be higher ordered states. So $[a] > [b]$ if and only if $a > b$. To see the practical significance of monotonous decreasing utility, consider a scenario where the welfare is maximized at the symmetric cooperative state, and the welfare of this game is monotonously decareasing as the number of defective agents increases.

% \begin{proposition}
    
% \end{proposition}

% \begin{definition}
%     We say $G^*$ has strictly unique equilibrium payoff vectors if every element of each payoff vector appear exactly once in exactly one payoff vector
% \end{definition}

% Two sets of vectors that are not strictly unique:

% \begin{itemize}
%     \item $\{<0,1,2> , <0,5,3>\}$, 0 appears in both vectors.
%     \item $\{<3,3,4> , <5,6,7>\}$, 3 appears twice in first vector.
% \end{itemize}

% Two sets of vectors that are strictly unique:

% \begin{itemize}
%     \item $\{<0,1,2>, <0.1,0.5,0.3>\}$.
%     \item $\{<5,3,1>, <2,4,8>\}$.
% \end{itemize}

\subsection{Transition Properties of $G^*$}

\begin{theorem}
    If $G^*$ has monotonously decreasing utility function and is locally  non-cooperative, then every state leads to exactly one state.
\end{theorem}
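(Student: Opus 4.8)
The plan is to prove the two halves of ``exactly one'' separately, after first using Corollary 3.1.1 to reduce the candidate targets of any $[b]$ to $[b-1]$, $[b]$ and $[b+1]$; the theorem then amounts to showing that precisely one of these three relations holds. Existence is the easy half: by the definition of leading, either some agent at $[b]$ can strictly raise their limit-of-the-means payoff by a single switch, in which case the induced adjacent state is a target, or no agent can, in which case $[b]$ is an equilibrium and $[b]\rightarrow[b]$. So a target always exists and the real content is uniqueness.

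For uniqueness the self-loop case is immediate: if $[b]\rightarrow[b]$ then $[b]$ is an equilibrium, so by definition no agent improves by switching, which rules out $[b]\rightarrow[b+1]$ and $[b]\rightarrow[b-1]$ at once. The genuine difficulty is to exclude $[b]\rightarrow[b+1]$ and $[b]\rightarrow[b-1]$ holding together, and I would obtain this from the stronger claim that, under the two hypotheses, no state ever leads downward at all. Granting that claim, from any $[b]$ the walk can only move up or stay; these are mutually exclusive by the self-loop remark, so exactly one target survives, with $[b]\rightarrow[b+1]$ exactly when the eventual payoff $u(1,E([b+1]))$ a cooperative agent secures by defecting exceeds $u(0,[b])$, and $[b]\rightarrow[b]$ otherwise. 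Here $E([c])$ denotes the equilibrium that the walk started at $[c]$ eventually reaches. Since no single step lowers the state, no chain can descend and in particular no two-cycle $[b]\leftrightarrow[b+1]$ can arise, so the upward chains are exactly the monotone ones of Lemma 3.4 and Corollary 3.4.1.

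The heart of the proof is therefore the downward-exclusion claim, which I would establish from an asymmetry forced by local non-cooperativeness. Whenever the walk settles at an equilibrium $[e]$ an agent earns $u(1,[e])$ if they end up defecting and $u(0,[e])$ if cooperating, and the first bullet gives $u(1,[e])>u(0,[e])$. Hence a defective agent never strictly gains by being the one who switches to cooperation. Concretely, any downward movement must have a final step $[e+1]\rightarrow[e]$ into the state where it settles; the agent performing that last cooperation ends as a cooperator earning $u(0,[e])$, whereas declining it leaves them a defector earning $u(1,[e+1])$, and the second bullet $u(1,[e+1])>u(0,[e])$ makes declining strictly better, so that last step cannot be an improving switch and downward settling is impossible. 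The same first-bullet inequality supplies the free-riding intuition for intermediate steps: a defector contemplating a cooperation that would trigger a cascade is always better off remaining a defector and letting a symmetric peer cooperate, reaching the same lower state but at the defector payoff. Together these show $[b]\not\rightarrow[b-1]$ for every $b$.

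The step I expect to be the main obstacle is making these downward comparisons fully rigorous, because ``improve their limit-of-the-means payoff'' must be measured against the equilibrium the walk actually reaches, and that equilibrium is defined through the very leading relation being pinned down. I would break the circularity by defining $E(\cdot)$ by downward recursion from $[n]$: the state $[n]$ is always an equilibrium, and each lower $[b]$ either inherits $E([b+1])$ when $u(1,E([b+1]))>u(0,[b])$ or is itself declared an equilibrium. Finiteness of $S_n$ makes the recursion terminate, and the remaining task is to verify that the free-riding inequalities above are consistent with this map, i.e.\ that no defective agent ever gains by deviating from it; this consistency is exactly what certifies that the single-valued map produced by the recursion is the leading relation asserted by the theorem.
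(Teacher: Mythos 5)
Your overall architecture is genuinely different from the paper's. The paper works locally: it fixes one state $[b]$, enumerates the four ways it could lead to two states, disposes of the three involving a self-loop immediately, and then spends all its effort on the case $[b]\rightarrow[b+1]$ together with $[b]\rightarrow[b-1]$, tracking a specific deviating agent through the monotone chain to a lowest equilibrium $[e^-]$, proving that agent must end up cooperative there, and chaining the inequalities $u(0,[b+1])<u(1,[b+1])<u(1,[b])<u(0,[e^-])$ to force a contradiction. You instead propose a global statement --- no state ever leads downward --- plus a downward recursion from $[n]$ defining the destination map $E(\cdot)$. That is an attractive reorganization (it would make Lemma 3.11 and Lemma 3.13 nearly immediate), but as written it has a genuine gap at its center: the downward-exclusion claim is asserted, not proven. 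Your final-step argument ("the agent performing the last cooperation into $[e]$ earns $u(0,[e])<u(1,[e+1])$") presupposes that a downward move settles at a strictly lower equilibrium reached by a monotone descending chain. That is exactly what the paper must work to establish: it needs Lemma 3.3 and Lemma 3.4 to rule out $e^-\ge b$, and that step uses the monotone-decreasing hypothesis, which your downward-exclusion argument never invokes. Without it, a deviation $[b]\rightarrow[b-1]$ could a priori be justified by a chain that turns around and settles above $[b]$, and your final-step inequality says nothing about that case.

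A second, related problem is that you silently switch baselines for what "improve" means. In the final-step paragraph the declining agent's alternative is $u(1,[e+1])$ (the game stays put); in the free-riding paragraph the alternative is "a symmetric peer cooperates and I reach the same lower state as a defector," i.e.\ $u(1,[e])$. The paper's proofs consistently use the staying-put baseline ($u(x,[e])$ versus $u(1-x,[b])$), and the theorem is sensitive to this choice: one can write down monotone, locally non-cooperative payoffs with $u(0,[b])<u(1,[e])<u(0,[b-1])<u(1,[b])$ for an upward destination $[e]$ of $[b]$, under which a defector at $[b]$ gains by dropping to a frozen $[b-1]$ if his baseline is $u(1,[e])$ but not if it is $u(1,[b])$. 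Finally, the closing "remaining task is to verify that no defective agent ever gains by deviating from the recursion" is not a routine check --- it is the case-4 analysis of the paper (in particular, showing the deviator ends up \emph{cooperative} at the destination so that the comparison is against $u(0,[e^-])$ rather than $u(1,[e^-])$), and deferring it means the actual content of the theorem is left undone.
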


\begin{proof}
    There only exist four possible cases for a state $[b]$ to lead to more than one state.
    
    \begin{itemize}
    \item[1.] $[b] \rightarrow [b]$, $[b] \rightarrow [b+1]$, $[b] \rightarrow [b-1]$ 
    \item[2.] $[b] \rightarrow [b]$, $[b] \rightarrow [b+1]$
    \item[3.] $[b] \rightarrow [b]$, $[b] \rightarrow [b-1]$
    \item[4.] $[b] \rightarrow [b+1]$, $[b] \rightarrow [b-1]$
\end{itemize}

    By definition, a state $[b]$ leads to $[b]$ if and only if $[b]$ is an equilibrium, meaning no single player is able to single-handedly improve their equilibrium utility at $[b]$. Notice that $[b] \rightarrow [b+1]$ or $[b] \rightarrow [b-1]$ suggests the opposite, thus scenario 1, 2, 3 are self-contradictive.

    % Thus for $[b]$ to both lead to $[b]$ and $[b+1]$, it must be true that the limiting means payoff for defective agents at $[b+1]$ is equal to the limiting means payoff for cooperative agents at $[b]$, which is impossible since the equilibrium payoffs are strictly unique. The same idea applies to $[b-1]$, thus scenario 1, 2, 3 can not happen in this case.
    
    Now let's consider the fourth scenario, which implies that both $[b-1]$ and $[b+1]$ are chaiend to some equilibrium state with higher limit payoff for at least one type of agent. Thus there must exist some equilibrium states $[e^+], [e^-] \in S_n$ where $[b-1] \Rightarrow [e^-]$ and $[b+1] \Rightarrow [e^+]$ such that for some $x^-,x^+ \in \{0,1\}$, the following is true
    
    \[u(x^-,[e^-]) > u(1,[b]) \text{, and } \] 
    
    \[u(x^+,[e^+]) > u(0,[b]).\]
    
    It must be true that $e^- < b$ and $e^+ > b$. Because suppose $e^- \ge [b]$ then $[b] \Rightarrow [b]$ by Lemma \ref{monotone lemma}. This implies that $[b] \rightarrow [b]$  and the fourth scenario is equilvalent to one of the first three, which we have proved as contradictive. The same logic can be applied to $e^+$. 
    
    Now let's consider agent $a_i$ who switched from defective at $[b]$ to cooperative at $[b-1]$, there will always exist such agent because $[b] \rightarrow [b-1]$. Since $[b-1] \Rightarrow [e^-]$, such agent will eventually end up in state $[e^-]$ and accept its equilibrium payoff at $[e^-]$. Since we haven't proved that every state lead to exactly one state, there can exist multiple equilibrium states that satisfy the conditions for $[e^-]$, let the set $E^-$ contain all states that can be $e^-$. Since we have established an ordering of states, let $[e^*]$ be the highest-ordered element of $E^-$. 
    
    Suppose $a_i$ is defective at $[e^*]$, then $[e^*] < b-1$. By Lemma 3.1, we know that there exists a monotone chain of connections from $[b]$ to $[e^*]$, and somewhere along this chain, there must be a state $[c] > [e^*]$ where $a_i$ reverts from cooperative to defective. We only allow one state switch at a time, so $[c]$ must lead to $[c+1]$. Since $[c+1]$ leads to $[c]$ as well, it must be true that $[c] \rightarrow [c]$. This dynamic is shown in the diagram below. Observe that $c \in E^-$ and has higher order than $e^-$, which gives us a contradiction. So $a_i$ must be cooperative at $e^-$.
    
    \vspace{0.5cm}
    
    \begin{figure}[h]\

\centering
    \begin{tikzpicture}
  [scale=.54,auto=left,every node/.style={draw, circle, thin, minimum size=1.3cm}]
  
  \tikzset{edge/.style = {->,> = latex'}}
  \node[bag] (n1) at (19,0) {$[b]$};
  \node[bag] (n2) at (16,0)  {$[b-1]$};
  \node[bag, draw = none] (n3) at (13,0) {. . .};
  \node[bag] (n4) at (10,0) {$[c+1]$};
  \node[bag] (n5) at (7,0) {$[c]$};
  \node[bag, draw = none] (n6) at (4,0) {. . .};
  \node[bag] (n7) at (1,0) {$[e^*]$};
  
%   \node at (4.5,4) [draw=none] 
%     {...};
  
  \draw[edge] (n1)  to[bend right] (n2);
  \draw[edge] (n2)  to[bend right] (n3);
  \draw[edge] (n3)  to[bend right] (n4);
  \draw[edge] (n4)  to[bend right] (n5);
  \draw[edge] (n5)  to[bend right] (n6);
  \draw[edge] (n5)  to[bend right] (n4);
  \draw[edge] (n6)  to[bend right] (n7);
  \draw[edge] (n7)  to[loop left] (n7);
  % \draw[edge] (n1)  to[loop above] (n1);
  % \draw[edge] (n4)  to[loop left] (n4);

\end{tikzpicture}
\end{figure}

\vspace{0.5cm}
    
    Plugging $x^- = 0$ into the previous expression, we obtain
    
    \[u(0,[e^-]) > u(1,[b]).\]
    
    Now consider the state $[b+1]$. By the monotone decreasing property of the game's utility function, we obtain \[u(1,[b+1]) < u(1,[b]) < u(0,[e^-]).\]
    
    Drawing from the locally non-cooperative property of the game, we can expand this inequality to 
    
    \[u(0,[b+1]) < u(1,[b+1]) < u(1,[b]) < u(0,[e^-]).\]
    
    Suppose there is a defective agent $a_i$ at $[b+1]$, its payoff will not increase if the game reach any higher ordered state no matter whether it's defective or cooperative at that state. However, the cooperative payoff at $[e^-]$ is higher than its current defective payoff. Since $[b] \Rightarrow [e^-]$, and the agent who induced $[b+1]$ into $[b]$ must be cooperative at $[e^-]$, Switching from defective to cooperative is the only feasible strategy for $a_i$ to increase its payoff, therefore $[b+1] \rightarrow [b]$, which means $[b] \rightarrow [b]$, and the fourth scenario is analogous to the first which is contradictive. 
    
    Suppose there is no defective agent at $[b+1]$, then $[b+1] = [0]$, and $[b] = [-1]$, which is not a valid state of $G$.

    % Since every equilibrium payoff vector is unique in every element, these

    % lead to  same limiting means payoff. Thus there must exist some equilibrium state $<e> \neq <b>$ where  $<b+1> \Rightarrow <e>$ and $<b-1> \Rightarrow <e>$. 
    
    % are chained to equilibriums states with identical payoff at one of the moves. Given the utility function is one-to-one, they must lead to some common state. Thus there must exist some equilibrium state $[e] \neq [b]$ where $[b+1] \Rightarrow e$ and $[b-1] \Rightarrow [e]$. Without loss of generality, let $e \ge b+1$, then for the state $[b-1]$ to reach $[e]$ the state $[b]$ must be traversed, which implies $[b] \Rightarrow [b]$ and $[b] \rightarrow [b]$ by lemma 4.11. This means scenario (4) is identical to scenario (1), which is contradictive.
\end{proof}

% When we introduced the equilibrium of $sup(G)$ in the previous section, we assumed it to be a single state, which may not be true in general. We now provide a back-up to our previous assumption. An immediate corollary from the above lemma is that if $u:\{0,1\} \times S_n \mapsto \Bbb{R}$ is one-to-one, then the equilibrium of $sup(G)$ must be a single state. 

\begin{corollary}
    If $G^*$ has monotonously decreasing utility function, and is locally  non-cooperative, then the sequence of state switches for $G^*$ contains exactly one cycle, and such cycle can only contain one state.
\end{corollary}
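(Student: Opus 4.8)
The plan is to build directly on the preceding theorem, which makes the transition relation $\to$ a \emph{function} on the finite set $S_n$: each state $[b]$ has a unique successor. Consequently the supergame's sequence of state switches, started from its initial state, is a deterministic orbit under this successor map. Since $S_n$ is finite the orbit is eventually periodic and its repeated tail is a single cycle; a deterministic orbit cannot enter two distinct cycles. This already yields the first half of the claim. It remains to show that this cycle consists of a single state.

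Next I would determine the possible shape of a cycle. Suppose a cycle $[v_1] \to [v_2] \to \cdots \to [v_k] \to [v_1]$ contains $k \ge 2$ distinct states. By the corollary that a lead changes the defector count by at most one, and since the $[v_i]$ are distinct (a self-loop cannot occur strictly inside a longer cycle), each step changes the count by exactly $\pm 1$. Let $[M]$ be a state of maximal count in the cycle; its two cyclic neighbours must each have count $M-1$, since they cannot exceed $M$, cannot equal $M$, and differ from it by one. If $k \ge 3$ these neighbours are two distinct states sharing the same count, which is impossible. Hence $k = 2$ and the only candidate multi-state cycle is the two-cycle $[b] \leftrightarrow [b+1]$. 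One checks that this configuration is consistent with the chaining lemmas and with Lemma \ref{monotone lemma}, so the monotone-chain machinery alone cannot exclude it.

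The crux is therefore to rule out the two-cycle $[b] \to [b+1] \to [b]$, and here I would invoke the game properties. In this cycle a single agent $a$ oscillates, cooperating at $[b]$ and defecting at $[b+1]$, while the other $n-1$ agents keep fixed actions. Its limit-of-means payoff is the average $\tfrac12\big(u(1,[b]) + u(0,[b+1])\big)$. Holding the others fixed, $a$ may instead commit to a constant action: always defecting freezes the state at $[b+1]$ for payoff $u(0,[b+1])$, and always cooperating freezes it at $[b]$ for payoff $u(1,[b])$. Because the oscillation payoff is the midpoint of these two numbers, it is strictly dominated by whichever constant action yields the larger stage payoff, the two being separated by the monotonously decreasing and locally non-cooperative hypotheses. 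Thus at one of the two states the move that sustains the cycle is not $a$'s payoff-improving choice, so one of the required leads $[b]\to[b+1]$ or $[b+1]\to[b]$ fails. This mirrors the treatment of the fourth scenario in the proof of the preceding theorem, where $[b+1]\to[b]$ was shown to force $[b]\to[b]$ and hence to contradict $[b]\to[b+1]$ by uniqueness.

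I expect this last step to be the main obstacle: one must argue carefully that the oscillating agent can unilaterally enforce a constant state and that its averaged payoff is strictly, rather than weakly, dominated, which is precisely where the strict inequalities in the two game properties are consumed. Once the two-cycle is excluded, the only surviving cycle is a single state $[b]$ with $[b]\to[b]$, which is by definition an equilibrium; together with the first paragraph this gives both assertions of the corollary.
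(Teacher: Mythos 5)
Your first two steps track the paper closely: the paper likewise gets ``exactly one cycle'' from finiteness plus the functionality of $\to$ established in Theorem 3.7, and it likewise reduces the second claim to excluding the two-cycle $[b]\to[b+1]\to[b]$ (the paper's reason for ``at most two states'' is simply that $[b]$ can only lead to $[b]$, $[b\pm 1]$; your maximal-count argument is a slightly more careful variant of the same observation). Where you genuinely diverge is the exclusion of the two-cycle. The paper's argument is shorter and does not look at any individual agent's average: since both $[b]$ and $[b+1]$ lie on the same cycle, they are chained to the \emph{same} limit-of-means payoff vector (the cycle's average), so the strict improvement that the definition of leading requires for $[b]\to[b+1]$ is impossible. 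Your argument instead computes the oscillating agent's cycle payoff as a midpoint and compares it to ``freezing'' deviations.

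Two concrete problems with your version. First, the states $[j]$ are equivalence classes counting defectors, so a two-cycle in state space need not be implemented by a single oscillating agent: the C$\to$D switch at $[b]$ and the D$\to$C switch at $[b+1]$ may be made by different agents drawn (per the paper's interpretation) uniformly at random each round, and then no agent's long-run average is the midpoint you wrote down. Second, the claim that committing to a constant action ``freezes the state'' is not licensed by the solution concept: if the designated agent declines to switch at $[b+1]$, symmetry means any other defective agent could make the improving switch instead, so the state need not stay put; you are importing a unilateral-deviation counterfactual that the leading relation does not provide. (There is also a notational slip throughout: cooperating at $[b]$ pays $u(0,[b])$ and defecting at $[b+1]$ pays $u(1,[b+1])$ in the paper's convention, so your midpoint should read $\tfrac12\bigl(u(0,[b])+u(1,[b+1])\bigr)$; the strict separation you need then does come from local non-cooperativity.) The paper's route avoids all of this by comparing the chained equilibrium payoffs of $[b]$ and $[b+1]$, which coincide by construction, rather than comparing the cycle payoff to a hypothetical frozen state.
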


\begin{proof}
    The game $G^*$ is played on a infinite horizon with finite number of unique states. Thus $G^*$ is guaranteed to reach some state twice which creates at least one cycle. Then because every state leads to exactly one state, once the game reaches a cycle of state switches it can not reach any other state outside of such cycle. Therefore $G^*$ contains exactly one cycle.
    
    To show that a cycle in $G^*$ can only contain one state, we first show it can contain at most two states then show a 2-state cycle is impossible. In order to create a cycle of length greater than 2, some state $[b]$ must lead to another state $[s]$ where $s \le b-2$ or $s \ge b+2$. This is unfeasible since $[b]$ can only lead to $[b]$, $[b+1]$, or $[b-1]$. Now imagine we have a 2-state cycle, which means $[b] \rightarrow [b+1]$ and $[b+1] \rightarrow [b]$ for some $b \in \{0, 1,..., n-1\}$. In this scenario, $[b] \rightarrow [b+1]$ implies one player at $[b]$ is capable of inducing a better equilibrium (limit of the means) payoff by transitioning to $[b+1]$, implying $[b]$ and $[b+1]$ lead to different equilibriums payoff vectors. However, the equilibrium payoff vector that $[b]$ and $[b+1]$ are chained to are the same, namely, the payoff of cycling between $[b]$ and $[b+1]$. This creates a contradiction. 
\end{proof}

Notice that an immediate corollary from the above lemma is that every state is chained to exactly one equilibrium state, which means the payoff per round for a state $s$ is asymptotically the payoff of at the equilibrium state that it is chained to. 

\begin{lemma}
     Suppose $G^*$ has monotonously decreasing utility function, and is locally  non-cooperative. If $[b] \rightarrow [b+1]$ for some $b \in \{0,1,...,n-1\}$, then there exists some equilibrium state $[e]$ such that $e \ge b+1$ and $[d] \Rightarrow [e]$ for all $d \in \{b+1, b+2, ...,e-1\}$.  
\end{lemma}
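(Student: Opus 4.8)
The plan is to exploit the global structure just established above. By the preceding theorem every state leads to exactly one state, and by its corollary the unique cycle of state switches consists of a single equilibrium state; the remark following that corollary then tells us that every state is chained to exactly one equilibrium state. In particular $[b+1]$ is chained to some equilibrium $[e]$, so $[b+1] \Rightarrow [e]$. The lemma therefore reduces to two claims: that this equilibrium satisfies $e \ge b+1$, and that every intermediate state chains to it.

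First I would pin down the location of $[e]$ relative to $b$. Suppose, toward a contradiction, that $e \le b$, so that $[b+1] \Rightarrow [e]$ with $e < b+1$. Applying the descending direction of Lemma \ref{monotone lemma} to this chain (taking the index $d = b+1$, which lies in the admissible range since $e < b+1$) yields $[b+1] \rightarrow [b]$. But we are given $[b] \rightarrow [b+1]$, so $[b]$ and $[b+1]$ would form a two-state cycle, contradicting the corollary that a cycle can contain only one state. Hence $e \ge b+1$.

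With $e \ge b+1$ in hand I would split on whether equality holds. If $e = b+1$, the index set $\{b+1, \dots, e-1\}$ is empty and the conclusion is vacuous. If $e > b+1$, I would feed the chain $[b+1] \Rightarrow [e]$, now in the ascending direction $e > b+1$, into Lemma \ref{monotone lemma}, which produces the monotone chain reaction $[d] \rightarrow [d+1]$ for every $d$ with $b+1 \le d < e$. For any such $d$ this assembles the path $[d] \rightarrow [d+1] \rightarrow \cdots \rightarrow [e]$, witnessing $[d] \Rightarrow [e]$, exactly as required.

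I expect the only genuine obstacle to be the first claim, namely ruling out that the unique equilibrium sits at or below $[b]$; everything after that is a direct invocation of the monotone lemma. The crux is that a single forced descending step would immediately close a two-cycle with the given edge $[b] \rightarrow [b+1]$, which the single-state-cycle corollary forbids. In writing it up I would be careful to justify that the ``WLOG'' in Lemma \ref{monotone lemma} is being used in its descending form, and to confirm that the boundary index $d = b+1$ genuinely falls within that lemma's range when $e \le b$.
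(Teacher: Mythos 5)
Your proof is correct and follows essentially the same route as the paper's: identify the equilibrium $[e]$ to which $[b+1]$ is chained using Theorem 3.7 and its corollary, then force every intermediate state to lead upward via the monotone chain of Lemma \ref{monotone lemma}. You are in fact more careful than the paper on one step --- the paper simply asserts $e \ge b+1$ from $[b] \rightarrow [b+1]$, whereas you derive it by showing that $e \le b$ would force $[b+1] \rightarrow [b]$ and hence a two-state cycle forbidden by the corollary to Theorem 3.7.
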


\begin{proof}
    Given $[b] \rightarrow [b+1]$, we know that $[b] \Rightarrow [e]$ for some $e \ge b+1$. By Theorem 3.7, we know each state $[d]$ satisfying $b \le d < e$ can only lead to one state. If for any $d \in \{b+1,b+2,...,c-1\}$, it isn't true that $[d] \rightarrow [d+1]$, then $[d] \rightarrow [d-1]$, which implies $[b] \not \Rightarrow [c]$. Therefore it must be true that $[d] \rightarrow [d+1]$, which means $[d] \Rightarrow [e]$. 
\end{proof}

\begin{theorem}
    Suppose $G^*$ has monotonously decreasing utility function, and is locally  non-cooperative. If state $[b] \Rightarrow [c]$ whereas $c > b$, then there must exist some equilibrium state $[e]$ such that $e \ge c$ and $u(1,[e]) > u(0,[b])$. Similarly, if $[b] \Rightarrow [c]$ and $c < b$, then $e \le c$ and $u(0,[e]) > u(1,[b])$.
\end{theorem}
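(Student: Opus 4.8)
The plan is to prove the first case ($c>b$) in full and to obtain the second case ($c<b$) by the mirror argument in which ``cooperate''/``defect'' and the up/down directions are interchanged, since the hypotheses and the cited lemmas are symmetric under this exchange. Throughout I rely on the fact that, by Theorem 3.7 and Corollary 3.7.1, the leading relation is a function whose trajectory from any state is deterministic and terminates in a unique single-state cycle, so that $[b]$ is chained to exactly one equilibrium, which I will call $[e]$.

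First I would fix $[e]$ and bound its order. Since $[b]\Rightarrow[c]$ with $c>b$, Lemma \ref{monotone lemma} forces $[d]\rightarrow[d+1]$ for every $d$ with $b\le d<c$; in particular $[b]\rightarrow[b+1]$, and the unique forward trajectory from $[b]$ runs $[b]\rightarrow[b+1]\rightarrow\cdots\rightarrow[c]$ before reaching $[e]$. To see $e\ge c$, I would show the trajectory never descends below $c$: by the at-most-one-step property (Corollary 3.1.1) the first descent below $c$ would have to be a step $[c]\rightarrow[c-1]$; but $[c-1]\rightarrow[c]$ already holds, which would produce a two-state cycle, contradicting Corollary 3.7.1. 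Iterating this observation shows the trajectory is non-decreasing from $[c]$ onward, hence $e\ge c$. An appeal to Lemma 3.8 additionally guarantees $[b+1]\Rightarrow[e]$, which I need below.

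Next I would establish $u(1,[e])>u(0,[b])$, and this is the step I expect to be the main obstacle. The idea is the beneficial-deviation device already used inside the proof of Theorem 3.7. The step $[b]\rightarrow[b+1]$ is induced by a single agent $a_i$ whose switch strictly raises its limit-of-the-means payoff; because $[b+1]\Rightarrow[e]$, the limit payoff $a_i$ actually collects equals its equilibrium payoff at $[e]$, while its pre-switch baseline is its payoff at $[b]$. Thus the leading relation alone already compares the payoff at $[e]$ with the payoff at $[b]$, so no step-by-step propagation along the chain is needed. The delicate bookkeeping is to pin down $a_i$'s action at $[e]$: reusing the monotone-chain (the ``$[e^*]$'') argument of Theorem 3.7 transcribed to the upward direction, $a_i$ cannot revert along the chain without manufacturing an intermediate equilibrium of the wrong order, which the single-switch rule and Corollary 3.7.1 forbid. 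Reading off the resulting comparison in the labeling of the statement yields $u(1,[e])>u(0,[b])$. I would corroborate the base case $e=b+1$ directly from the locally non-cooperative inequality $u(1,[b+1])>u(0,[b])$, and invoke the monotone-decreasing property to keep the bookkeeping consistent as $[e]$ moves above $[b+1]$.

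The genuinely hard part is therefore not the existence or ordering of $[e]$, which is pure bookkeeping on the deterministic functional graph, but the correct identification of the deviating agent's status at the terminal equilibrium and the direction of the payoff comparison; this is exactly where the ``one and only one'' restriction in the definition of leading, together with the exclusion of multi-state cycles, carries the argument. For the case $c<b$ I would dualize every ingredient: Lemma \ref{monotone lemma} now gives $[d]\rightarrow[d-1]$, the mirror of Lemma 3.8 produces an equilibrium $[e]$ with $e\le c$, the inducing agent is of the opposite type at $[e]$, and the same reasoning delivers $u(0,[e])>u(1,[b])$.
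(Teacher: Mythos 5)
Your proposal is correct and follows essentially the same route as the paper: extract the deviating cooperative agent from $[b]\rightarrow[b+1]$, observe that the chain from $[b+1]$ up to $[e]$ consists only of $+1$ steps so that agent never reverts to cooperation and is therefore defective at $[e]$, and read off $u(1,[e])>u(0,[b])$, with the second case by symmetry. Your additional care in deriving $e\ge c$ from the impossibility of two-state cycles is a detail the paper's proof leaves implicit, but it is the same argument, not a different one.
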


\begin{proof}
    The two cases in the claim are symmetrical, so it suffices to show one of the cases is true. If $[b] \Rightarrow [c]$ and $c > b$ then we know that $[b] \rightarrow [b+1]$. This implies a cooperative agent $a_i$ at $[b]$ is able to let the game reach a better equilibrium state $[e]$ for itself by switching to defecting. If we show $a_i$ must be defecting at $[e]$, it immediately follows that $u(1,[e]) > u(0,[b])$. 
    
    We know that $[b+1] \Rightarrow [e]$, thus the state transition sequence going from $[b+1]$ to $[d]$ must only contain transitions in the form of $[h] \rightarrow [h+1]$ for $h\in \{b+1,b+2,...,d-1\}$.  Since only one agent is allowed a strategy switch between any state transition, this implies no player is switching from defecting to cooperating in any of these state transitions. Given $a_i$ is defecting at $[b+1]$, it must be defecting at $[e]$ as well. 
\end{proof}

\begin{theorem}
    Suppose $G^*$ has monotonously decreasing utility function, and is locally  non-cooperative, then for any $b \in \{0,1,...,n\}$, 
    
    \begin{itemize}
        \item $[b] \rightarrow [b-1]$ implies there exists some state $[e]$ such that $e \le b-1$, $[e] \rightarrow [e]$, and for any $d \in \{b, b-1, ..., e+1\}$, $u(0,[e]) > u(1, [d])$;
        \item $[b] \rightarrow [b+1]$ implies there exists some state $[e]$ such that $e \ge b+1$, $[e] \rightarrow [e]$, and for any $d \in \{b, b+1, ..., e-1\}$, $u(1,[e]) > u(0, [d])$.

    \end{itemize}
\end{theorem}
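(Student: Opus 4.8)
The plan is to prove the second (upward) bullet in full and recover the first by the mirror argument, since defection/cooperation and the $+1$/$-1$ directions are interchangeable and the only facts I will use have symmetric downward analogues (the downward half of Lemma \ref{monotone lemma} and the second case of the preceding theorem). So assume $[b] \rightarrow [b+1]$; I must exhibit a single equilibrium $[e]$ with $e \ge b+1$, with $[e] \rightarrow [e]$, and with $u(1,[e]) > u(0,[d])$ for every $d \in \{b, b+1, \dots, e-1\}$.

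First I would produce the terminal equilibrium. Lemma 3.8 applies verbatim to $[b] \rightarrow [b+1]$ and hands me an equilibrium state $[e]$ with $e \ge b+1$ together with $[d] \Rightarrow [e]$ for all $d \in \{b+1, \dots, e-1\}$; since $[b] \rightarrow [b+1]$ and $[b+1] \Rightarrow [e]$, chaining also gives $[b] \Rightarrow [e]$. This immediately secures the two structural requirements $e \ge b+1$ and $[e] \rightarrow [e]$, so the entire remaining work is the family of utility inequalities.

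Next, fixing any $d \in \{b, \dots, e-1\}$, I have $[d] \Rightarrow [e]$ with $e > d$, so Theorem 3.9 supplies \emph{some} equilibrium $[e']$ with $e' \ge e$ and $u(1,[e']) > u(0,[d])$. The conclusion I want is phrased in terms of the specific $[e]$ named above, not the $[e']$ that Theorem 3.9 merely asserts to exist, so the crux is to argue $[e'] = [e]$. Here I invoke determinism: by Theorem 3.7 every state leads to exactly one state, and by Corollary 3.7.1 the unique cycle of $G^*$ consists of a single equilibrium state, so every state is chained to exactly one equilibrium. Because $[d] \Rightarrow [e]$ and $[e] \rightarrow [e]$, that unique equilibrium is $[e]$, forcing $e' = e$ and hence $u(1,[e]) > u(0,[d])$.

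The main obstacle is precisely this reconciliation step: without the uniqueness from Theorem 3.7 and Corollary 3.7.1 there would be no reason the inequality produced by Theorem 3.9 should attach to the particular state $[e]$, and indeed the whole statement is only meaningful because the chained equilibrium is unique. I would also verify the two endpoints of the range to confirm Theorem 3.9 applies throughout: $d = b$ uses $[b] \Rightarrow [e]$, and $d = e-1$ uses $[e-1] \Rightarrow [e]$ (which holds when $e \ge b+2$; when $e = b+1$ the range collapses to $\{b\}$ and only the $d=b$ case is needed). Finally, the downward bullet follows by the identical argument with $u(1,\cdot)$ and $u(0,\cdot)$ exchanged, $[b+1]$ replaced by $[b-1]$, and the downward versions of Lemma \ref{monotone lemma} and Theorem 3.9 in place of the upward ones.
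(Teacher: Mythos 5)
Your proof is correct and follows essentially the same route as the paper, which only offers a one-line sketch ("a combination of Theorem 3.9 and definition of leading"). You supply the details that sketch omits — Lemma 3.8 for the terminal equilibrium, Theorem 3.9 for each inequality, and the uniqueness of the chained equilibrium (Theorem 3.7 and its corollary) to pin the inequality to the specific state $[e]$ — and your identification of that reconciliation step as the crux is exactly right.
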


\noindent \textit{Proof Sketch.}    This theorem is a combination of Theorem 3.9 and definition of leading.

\vspace{0.3cm}

In the class of supergames that we have defined, solving the game is equivalent to determine the leading relationship between the states, which can be represented via a directed graph. The states can be represented as vertices and we draw an edge from $s_i$ to $s_j$ if $s_i \rightarrow s_j$, a state is chained to another if there exists a path between the two. Based on the above derivation, we expect the graph to contain disjoint paths each with a cycle on the end vertex.

\begin{lemma}
    If $G^*$ has monotonously decreasing utility function and is locally  non-cooperative, then the state $[n]$ is an equilibrium.
\end{lemma}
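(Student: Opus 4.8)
The plan is to argue by contradiction, reducing the claim to two already-established structural facts: that the top state $[n]$ has no admissible higher neighbor, and that every state has a unique successor. First I would observe that $[n]$ is the all-defect state, so the only single-agent deviation available there is a switch from defecting to cooperating. By the corollary bounding a single transition (the statement that $[b]\rightarrow[c]$ forces $b-1\le c\le b+1$), the only states $[n]$ could lead to are $[n-1]$, $[n]$, or $[n+1]$; but $[n+1]\notin S_n$, so combining this with Theorem 3.7 (every state leads to exactly one state) pins the successor of $[n]$ down to either $[n]$ or $[n-1]$. Since, by the definition of leading, showing $[n]\rightarrow[n]$ is exactly showing that $[n]$ is an equilibrium, it suffices to rule out $[n]\rightarrow[n-1]$.

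Assuming for contradiction that $[n]\rightarrow[n-1]$, I would invoke the first clause of the two-part boundary theorem (Theorem 3.10) with $b=n$. This supplies an equilibrium state $[e]$ with $e\le n-1$, together with the family of inequalities $u(0,[e])>u(1,[d])$ valid for every $d\in\{n,n-1,\dots,e+1\}$. The key step is to evaluate this at the smallest admissible index $d=e+1$: this is legitimate precisely because $e\le n-1$ guarantees $e+1\le n$, so $e+1$ indeed lies in the range handed to us by the theorem. Reading off this single instance yields $u(0,[e])>u(1,[e+1])$.

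Finally I would confront this inequality with the second clause of the locally non-cooperative property (Definition 3.5). Since $e\in\{0,1,\dots,n-1\}$, that clause applies with its index equal to $e$ and delivers $u(1,[e+1])>u(0,[e])$, which is the exact reverse of the inequality extracted in the previous paragraph. The resulting contradiction eliminates the case $[n]\rightarrow[n-1]$, and by the dichotomy established at the outset we are left with $[n]\rightarrow[n]$; hence $[n]$ is an equilibrium.

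I expect the only real subtlety here to be bookkeeping at the boundary rather than any genuine obstacle: one must verify that $[n+1]$ is correctly excluded so that $[n-1]$ is the sole competitor to the self-loop, and that the index $e+1$ chosen to generate the contradicting inequality lands simultaneously inside the range $\{n,\dots,e+1\}$ furnished by Theorem 3.10 and inside the domain $\{0,\dots,n-1\}$ on which local non-cooperativeness is asserted. Once these index checks are in place, the monotone-decreasing hypothesis is not needed explicitly at the final step, since the local non-cooperativeness of the top transition alone already clashes with the downward-leading conclusion of Theorem 3.10.
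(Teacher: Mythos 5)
Your proposal is correct and follows essentially the same route as the paper: reduce the possibilities to $[n]\rightarrow[n]$ or $[n]\rightarrow[n-1]$, then rule out the latter by pitting the inequality $u(0,[e])>u(1,[e+1])$ from Theorem 3.10 against the locally non-cooperative condition $u(1,[e+1])>u(0,[e])$. Your version is merely more explicit about the boundary bookkeeping (excluding $[n+1]$ and checking that $d=e+1$ lies in the required index ranges), which the paper leaves implicit.
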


\begin{proof}
    The state $[n]$ can only lead to $[n]$ or $[n-1]$. Suppose $[n] \rightarrow [n-1]$, then by Theorem 3.10, there exists a state $[e]$ such that for any state $[d]$ between $[n]$ and $[e]$,  $u(0,[e]) > u(1,[d])$. However, given $G^*$ is locally non-cooperative, $u(1, [e+1]) > u(0,[e])$. Therefore $[n] \not \rightarrow [n-1]$ and $[n] \rightarrow [n]$.
\end{proof}

\begin{corollary}
    A supergame $G^*$ that has monotonously decreasing utility function and is locally non-cooperative will always have at east one symmetric equilibrium.
\end{corollary}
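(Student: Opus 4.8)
The plan is to exhibit a concrete symmetric equilibrium rather than to reason abstractly about the entire equilibrium set. First I would recall that among the states $[0], [1], \ldots, [n]$, exactly two are symmetric in the relevant sense: the state $[0]$, in which every agent cooperates, and the state $[n]$, in which every agent defects. These are the only two states of $G^*$ in which all $n$ agents adopt the same pure strategy, so any symmetric equilibrium must be one of these two. It therefore suffices to show that at least one of $[0]$ and $[n]$ is an equilibrium.

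Next I would invoke the definition of equilibrium together with Lemma 3.13. By Definition 3.1, a state $s$ is an equilibrium exactly when $s \rightarrow s$, that is, when no single agent can improve its limit-of-the-means payoff by unilaterally inducing a state switch. Lemma 3.13 establishes precisely that $[n]$ is an equilibrium under the two standing hypotheses that $G^*$ has a monotonously decreasing utility function and is locally non-cooperative. Since $[n]$ is simultaneously a symmetric state (all agents defect) and an equilibrium, it is a symmetric equilibrium, and hence $G^*$ possesses at least one symmetric equilibrium.

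I do not expect a genuine obstacle here, since the substantive work has already been carried out in Lemma 3.13, where the transition $[n] \rightarrow [n-1]$ was ruled out using the locally non-cooperative inequality $u(1,[e+1]) > u(0,[e])$ to contradict the chain condition $u(0,[e]) > u(1,[d])$ supplied by Theorem 3.10. The only point requiring care is the bookkeeping observation that $[n]$ indeed counts as symmetric, which is immediate from the convention that $[j]$ denotes the state with $j$ defective agents, so that $[n]$ is the all-defect profile. The corollary is thus an immediate consequence of Lemma 3.13.
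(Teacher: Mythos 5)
Your proof is correct and takes essentially the same approach as the paper: both simply observe that $[n]$ is a symmetric state and invoke the lemma establishing that $[n]$ is an equilibrium (Lemma 3.11 in the paper's numbering; you cite it as Lemma 3.13, but the content you describe is the right one). Your extra remark that $[0]$ and $[n]$ are the only symmetric states is harmless but not needed for the conclusion.
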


\begin{proof}
    The state $[n]$ is symmetric, and is always an equilibrium in this case by Lemma 3.11.
\end{proof}
    
\begin{lemma}
    If $G^*$ has monotonously decreasing utility function and is locally  non-cooperative, then for some state $[b] \in S_n$ such that $u(1,[n]) > u(0,[b])$, it must be true that $[b] \Rightarrow [n]$.
\end{lemma}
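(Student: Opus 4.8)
The plan is to show that the unique equilibrium state to which $[b]$ is chained is $[n]$ itself. By Theorem 3.7 and Corollary 3.7.1 every state is chained to exactly one equilibrium, and since $[n]$ is an equilibrium (Lemma 3.11), the conclusion $[b]\Rightarrow[n]$ holds precisely when that terminal equilibrium equals $[n]$. The case $b=n$ is trivial (and $u(0,[n])$ is undefined anyway), so I would assume $b<n$, so that the hypothesis reads $u(0,[b])<u(1,[n])$.

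First I would record a preliminary fact: under the two hypotheses no state ever leads downward, i.e. $[c]\not\rightarrow[c-1]$ for every $c$. This is exactly the mechanism of Lemma 3.11 run in general: if $[c]\rightarrow[c-1]$ then Theorem 3.10 supplies an equilibrium $[e]$ with $u(0,[e])>u(1,[e+1])$, contradicting the local non-cooperativity inequality $u(1,[e+1])>u(0,[e])$. Consequently the chain leaving any state is monotone non-decreasing (Corollary 3.4.1), so the terminal equilibrium $[e^*]$ of $[b]$ satisfies $e^*\ge b$.

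The heart of the argument is the claim that no state strictly below $[n]$ can be an equilibrium unless its cooperative payoff is already at least $u(1,[n])$; that is, if $[e^*]$ is an equilibrium with $e^*<n$, then $u(0,[e^*])\ge u(1,[n])$. To prove this I would take the cooperative agent at $[e^*]$ (one exists because $e^*<n$) and let it deviate to defection, moving the game to $[e^*+1]$. By the preliminary fact this state chains monotonically upward to some equilibrium $[e']$ with $e^*+1\le e'\le n$, and by the reasoning of Theorem 3.9 no agent switches from defection to cooperation along an upward cascade, so the deviating agent is still defective at $[e']$ and collects $u(1,[e'])$. Monotonicity of $u(1,\cdot)$ together with $e'\le n$ gives $u(1,[e'])\ge u(1,[n])$. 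Since $[e^*]$ is an equilibrium this deviation cannot pay, forcing $u(0,[e^*])\ge u(1,[e'])\ge u(1,[n])$, which is the claim.

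Finally I would combine the pieces: the terminal equilibrium $[e^*]$ of $[b]$ has $e^*\ge b$, so by monotonicity $u(0,[e^*])\le u(0,[b])<u(1,[n])$; if $e^*<n$ this contradicts the claim just established, hence $e^*=n$ and $[b]\Rightarrow[n]$. The step I expect to be the main obstacle is the bookkeeping inside the key claim: one must argue cleanly that a single deviation at $[e^*]$ really does trigger the canonical upward cascade out of $[e^*+1]$ and that the deviating agent's identity is preserved as a defector all the way to $[e']$, so that its limit-of-the-means payoff is genuinely $u(1,[e'])$ and not something smaller. Everything else reduces to monotonicity of the utilities and the local non-cooperativity inequality.
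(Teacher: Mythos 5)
Your proof is correct, and it reorganizes the argument rather than reproducing the paper's. The paper works forward along the chain: it first rules out $[b] \rightarrow [b-1]$ exactly as you do (Theorem 3.10 plus the local non-cooperativity inequality $u(1,[e+1]) > u(0,[e])$), then argues directly that a cooperative agent at $[b]$ profits from defecting because it remains defective all the way to the terminal equilibrium, whose defective payoff exceeds $u(0,[b])$ by the hypothesis and monotonicity; it then iterates this reasoning at $[b+1], [b+2], \dots, [n-1]$ to exhibit the chain $[b] \rightarrow [b+1] \rightarrow \cdots \rightarrow [n]$ explicitly. You instead argue by contradiction on the terminal equilibrium $[e^*]$ of $[b]$: after establishing globally (not just at $[b]$) that no state leads downward, you extract the necessary condition $u(0,[e^*]) \ge u(1,[n])$ for any equilibrium $[e^*]$ with $e^* < n$ --- obtained from the same deviation-and-upward-cascade mechanism the paper uses --- and observe that $e^* \ge b$, monotonicity, and the hypothesis make this impossible unless $e^* = n$. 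The two proofs rest on identical ingredients (no downward transitions, the deviator staying defective along an upward cascade, monotonicity of $u$), but your decomposition yields a standalone fact the paper never states: in any such $G^*$, every equilibrium state strictly below $[n]$ must have cooperative payoff at least $u(1,[n])$. The paper's version instead produces the explicit chain of leadings, which it reuses when analyzing the examples in Theorems 3.14 and 3.15.
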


\begin{proof}
    We will first consider whether $[b] \rightarrow [b-1]$. By Theorem 3.11, if $[b] \rightarrow [b-1]$ then there exists some equilibrium state $[e]$ such that $e \le b-1$, and for any $d \in \{b, b-1, ..., e+1\}$, $u(0,[e]) > u(1, [d])$. This condition will not hold if we let $[d] = [e+1]$, as the locally non-cooperative property suggests $u(1,[e+1]) > u(0,[e])$.
    
    Thus $[b]$ either lead to itself or $[b+1]$. If one cooperative agent at $[b]$ can improve its equilibrium payoff by switching to defect then $[b] \not \rightarrow [b]$, and $[b] \rightarrow [b+1]$. From the assumption that $u(1,[n]) > u(0,[b])$, we obtain 
    
    \[u(1,[n-1]) > u(0,[b]),\]
    
    because $u(1,[n-1]) > u(1,[n])$ by the locally non-cooperative property. By repeated applying this inequality to all states between $[b]$ and $[n]$, we can infer that the defective utility at all states higher than $[b]$ is greater than cooperative utility at $[b]$. Since a single sequence of state switch in $G^*$ can only contain 1-cycles, if a cooperative switches to defect at $[b]$, they are guaranteed to be defective at some state higher than $[b]$, which improves their payoff. The equilibrium utility of inducing a state switch from $[b]$ to $[b+1]$ is higher than staying idle, thus $[b] \rightarrow [b+1]$ if $u(1,[n]) > u(0,[b])$. 
    
    At the state $[b+1]$, observe that the condition $u(1,[n]) > u(0,[b+1])$ holds by the locally non-cooperative property. We can apply logic of the previous paragraph again to arrive at $[b+1] \rightarrow [b+2]$. Repeating this process, we obtain 
    
    \[[b] \rightarrow [b+1] \text{, } \ [b+1] \rightarrow [b+2]\text {, ... , }  [n-1] \rightarrow [n],\]
    
    which means $[b] \Rightarrow [n]$.

\end{proof}    

\begin{theorem}
    A supergame $G^*$ that has monotonously decreasing utility function and is locally non-cooperative can have only symmetric equilibria. 
\end{theorem}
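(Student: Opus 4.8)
The plan is to prove the statement in its universal form: for \emph{every} supergame $G^*$ satisfying both hypotheses, every equilibrium state is symmetric. Since the only symmetric states are $[0]$ (all agents cooperate) and $[n]$ (all agents defect), this is equivalent to showing that no asymmetric state $[e]$ with $0 < e < n$ can be an equilibrium. By Theorem 3.7 the leading relation is a function on $S_n$, so the equilibria are exactly its self-loops; by Lemma 3.11 the state $[n]$ is always such a self-loop, and $[0]$ is symmetric regardless of whether it is an equilibrium. Thus the entire burden is to rule out self-loops at interior states.

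First I would reduce the interior case to a single inequality using Lemma 3.12. If $[e]$ is interior and $u(1,[n]) > u(0,[e])$, then Lemma 3.12 gives $[e] \Rightarrow [n]$; since $e \neq n$ this means the successor chain of $[e]$ leaves $[e]$, so $[e]$ is not a self-loop and hence not an equilibrium. It therefore suffices to establish $u(1,[n]) > u(0,[e])$ for every $e$ with $0 < e < n$. The boundary case $e = n-1$ is immediate from local non-cooperativeness, which supplies $u(1,[n]) > u(0,[n-1])$. The remaining values I would try to reach by propagating this inequality downward: local non-cooperativeness gives $u(1,[e+1]) > u(0,[e])$ for each interior $e$, and combining these with the monotone ordering of the sequences $u(0,\cdot)$ and $u(1,\cdot)$ — the same propagation carried out inside the proof of Lemma 3.12 — would transfer the strict inequality from the top state $[n]$ to every interior state. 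Once $u(1,[n]) > u(0,[e])$ holds for all such $e$, every interior state is chained to $[n]$, so $[n]$ is the unique equilibrium; together with the symmetry of $[0]$, this yields that every equilibrium is symmetric.

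The step I expect to be the main obstacle is precisely this propagation of $u(1,[n]) > u(0,[e])$ from the top down to \emph{all} interior states, rather than only to $[n-1]$. Local non-cooperativeness pushes the cooperative payoff at each higher state above the defective payoff at the state below it, which drives the play upward toward $[n]$, while monotonicity simultaneously pulls every payoff down as the state index grows; the delicate point is reconciling these two forces so that a defecting agent's eventual payoff at $[n]$ still dominates its original cooperative payoff at $[e]$ — equivalently, that no interior state can insulate its cooperative agents from the upward pressure by offering a locally stable but asymmetric configuration. If the inequality cannot be pushed all the way down for some band of low interior states, the fallback is to handle those states by the symmetric downward argument (a defective agent profitably cooperating toward $[0]$, dual to Lemma 3.12), and then to argue that the upward region chained to $[n]$ and the downward region chained to $[0]$ jointly exhaust $S_n$. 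Verifying that these two regions meet with no gap, and therefore leave no interior self-loop, is the crux on which the whole theorem turns.
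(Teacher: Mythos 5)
There is a genuine gap, and it begins with a misreading of the quantifier. The theorem's phrase ``can have only symmetric equilibria'' is existential, not universal: it asserts that \emph{there exists} an instance of $G^*$ satisfying both hypotheses all of whose equilibria are symmetric. The paper proves it accordingly, by exhibiting a concrete three-player payoff table, invoking Lemma 3.11 and Lemma 3.12 to get $[1],[2] \Rightarrow [3]$, and then comparing payoffs at $[0]$ (staying yields $6$, defecting chains to $[3]$ with payoff $4$) to conclude $[0] \rightarrow [0]$, so the equilibria are exactly the symmetric states $[0]$ and $[3]$. Your proposal instead tries to prove the universal statement that \emph{no} interior state of \emph{any} such supergame can be an equilibrium --- and that statement is false. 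The paper's very next theorem exhibits a counterexample: a three-player instance with $u(0,[1]) = 3$ and $u(1,[3]) = 1.5$, in which the asymmetric state $[1]$ is an equilibrium.

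The step you yourself flagged as the main obstacle is exactly where the universal argument dies, and no repair is possible. Local non-cooperativeness only supplies the adjacent-state inequalities $u(1,[e+1]) > u(0,[e])$, while monotonicity pushes $u(1,[n])$ \emph{down} as $n$-ward states are reached; nothing in the two hypotheses forces the global inequality $u(1,[n]) > u(0,[e])$ that Lemma 3.12 takes as a \emph{hypothesis} rather than derives. In the paper's second example this inequality fails at $e = 1$, and that failure is precisely what stabilizes $[1]$: a cooperator who defects at $[1]$ triggers the chain $[2] \rightarrow [3]$ and ends with equilibrium payoff $1.5 < 3$, while the lone defector at $[1]$ already earns the game's maximum payoff $7$ and stays idle, so $[1] \rightarrow [1]$. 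Your fallback --- covering $S_n$ by an upward region chained to $[n]$ and a downward region chained to $[0]$ --- fails for the same reason: $[1]$ in that example is chained to neither; it is an interior self-loop. The correct task for this theorem is not to rule out interior equilibria in general but to construct one instance where the payoffs happen to make every interior state transient, which is what the paper's example does.
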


\begin{proof}
    We prove this statement by providing an example. Consider an instance of $G^*$ with three players and stage game payoff function defined as below.

\begin{table}[h]
\centering
\begin{tabular}{llcccc}
                                                      &                                    & \multicolumn{4}{c}{\textbf{State}}                                                                                        \\ \cline{2-6} 
\multicolumn{1}{l|}{}                                 & \multicolumn{1}{l|}{}              & \multicolumn{1}{c|}{{[}0{]}} & \multicolumn{1}{c|}{{[}1{]}} & \multicolumn{1}{c|}{{[}2{]}} & \multicolumn{1}{c|}{{[}3{]}} \\ \cline{2-6} 
\multicolumn{1}{l|}{\multirow{2}{*}{\textbf{Action}}} & \multicolumn{1}{l|}{0 (Cooperate)} & \multicolumn{1}{c|}{6}       & \multicolumn{1}{c|}{3}       & \multicolumn{1}{c|}{1}       & \multicolumn{1}{c|}{-}       \\ \cline{2-6} 
\multicolumn{1}{l|}{}                                 & \multicolumn{1}{l|}{1 (Defect)}    & \multicolumn{1}{c|}{-}       & \multicolumn{1}{c|}{7}       & \multicolumn{1}{c|}{5}       & \multicolumn{1}{c|}{4}       \\ \cline{2-6} 
\end{tabular}
\end{table}

Observe that this utility layout is monotonously decreasing and locally non-cooperative. By Lemma 3.11, we can conclude that $[3]$ is an equilibrium in $G^*$. By Lemma 3.12, $[1] \Rightarrow [3]$ and $[2] \Rightarrow [3]$. We can draw the leading relationship with the following directed graph. 

\newpage

\begin{figure}[h]
\centering
    \begin{tikzpicture}
  [scale=.8,auto=left,every node/.style={draw, circle, thin}]
  
  \tikzset{edge/.style = {->,> = latex'}}
  \node[bag] (n1) at (8,12) {[0]\\ (6, 6, 6)};
  \node[bag] (n2) at (12,10)  {[1]\\ (3, 3, 7)};
  \node[bag] (n3) at (8,8) {[2]\\ (1, 5, 5)};
  \node[bag] (n4) at (4,10) {[3]\\ (4, 4, 4)};
  
  \draw[edge] (n2)  to[bend left] (n3);
  \draw[edge] (n3)  to[bend left] (n4);
  % \draw[edge] (n1)  to[loop above] (n1);
  \draw[edge] (n4)  to[loop left] (n4);
\end{tikzpicture}
\caption{The state diagram representing the leading relationships. Each state has its label on top and the corresponding payoff of agents in the order of cooperative to defective from left to right at the bottom. }
\end{figure}
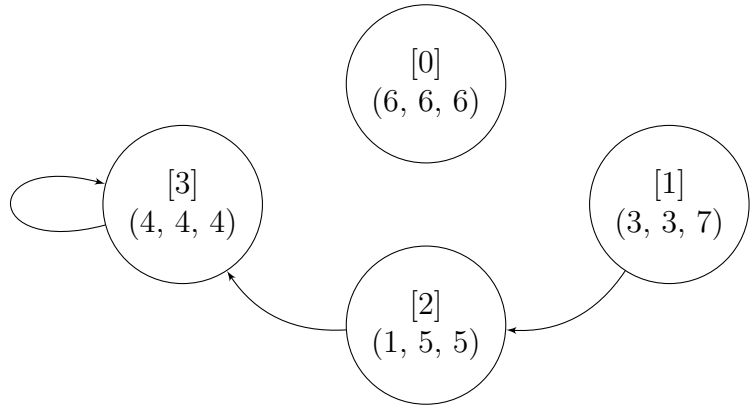

The state transition diagram in Figure 1 indicates that $[1],[2],[3] \Rightarrow [3]$. This means that if the game reaches any one of those states it will be in $[3]$ for the infinitely many rounds after. Now let's consider what $[0]$ leads to, if it leads to $[1]$ then every agent has limit of means payoff of $4$. If $[0]$ leads to itself then the agents can have payoff of $6$. Thus $[0] \rightarrow [0]$ and we have completed the state leading diagram.

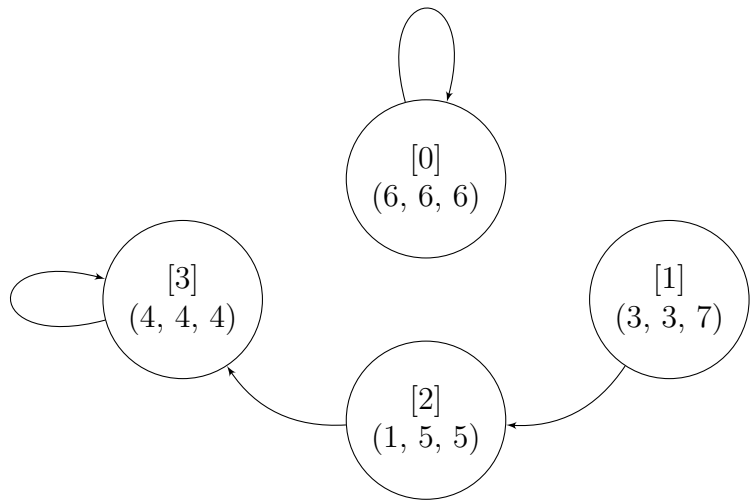
\begin{figure}[h]
\centering
    \begin{tikzpicture}
  [scale=.8,auto=left,every node/.style={draw, circle, thin}]
  
  \tikzset{edge/.style = {->,> = latex'}}
  \node[bag] (n1) at (8,12) {[0]\\ (6, 6, 6)};
  \node[bag] (n2) at (12,10)  {[1]\\ (3, 3, 7)};
  \node[bag] (n3) at (8,8) {[2]\\ (1, 5, 5)};
  \node[bag] (n4) at (4,10) {[3]\\ (4, 4, 4)};
  
  \draw[edge] (n2)  to[bend left] (n3);
  \draw[edge] (n3)  to[bend left] (n4);
  \draw[edge] (n1)  to[loop above] (n1);
  \draw[edge] (n4)  to[loop left] (n4);
\end{tikzpicture}
\caption{The completed diagram. }
\end{figure}

The equilibria in this game are $[0]$ and $[3]$, which are both symmetric. 

\end{proof}

\begin{theorem}
    A supergame $G^*$ that has monotonously decreasing utility function and is locally non-cooperative can have both symmetric and asymmetric equilibria. 
\end{theorem}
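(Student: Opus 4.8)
\section*{Proof proposal for Theorem 3.14}

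The plan is to establish this existence claim exactly as Theorem 3.13 establishes its counterpart: by exhibiting one concrete instance of $G^*$ that satisfies both structural hypotheses and possesses an equilibrium state $[b]$ with $0 < b < n$ (hence asymmetric), alongside the symmetric equilibrium $[n]$ that Lemma 3.11 always guarantees. I would again take $n=3$, so the candidate asymmetric equilibrium is $[1]$ (one defector, two cooperators), and design the stage payoffs as, say, $u(0,[0])=7$; $u(0,[1])=5$, $u(1,[1])=8$; $u(0,[2])=1$, $u(1,[2])=6$; $u(1,[3])=4$. First I would verify directly that this table is monotonously decreasing ($7>5>1$ and $8>6>4$) and locally non-cooperative (the defector beats the cooperator at each mixed state, and $u(1,[b+1])>u(0,[b])$ for every $b$), so that all results of the preceding subsections apply.

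Next I would read off the leading diagram. By Lemma 3.11 the state $[3]$ is an equilibrium. For $[2]$, local non-cooperativeness gives $u(1,[3])>u(0,[2])$, so a cooperator at $[2]$ strictly gains by defecting into $[3]$; hence $[2]\rightarrow[3]$ and $[2]\Rightarrow[3]$. The crux is $[1]$, where I must argue $[1]\rightarrow[1]$. The downward move $[1]\rightarrow[0]$ is impossible, exactly as in the proof of Lemma 3.11: by Theorem 3.10 it would force an equilibrium $[e]\le 0$ with $u(0,[e])>u(1,[1])$, i.e. $u(0,[0])>u(1,[1])$, contradicting local non-cooperativeness. For the upward move, a cooperator at $[1]$ who defects induces $[1]\rightarrow[2]$; since $[2]\Rightarrow[3]$ and no agent reverts from defection to cooperation along an ascending chain (Theorem 3.9), this agent lands as a defector at $[3]$ with payoff $u(1,[3])=4$. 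Because $u(0,[1])=5>4$, the switch does not improve the agent's limit-of-means payoff, so $[1]\not\rightarrow[2]$; with both non-trivial moves ruled out (Corollary 3.1.1), $[1]\rightarrow[1]$. Finally, since $[1]$ is now an equilibrium and $u(1,[1])=8>7=u(0,[0])$, a cooperator at $[0]$ strictly gains by defecting into $[1]$, so $[0]\rightarrow[1]$. The equilibria are therefore $[1]$, which is asymmetric, and $[3]$, which is symmetric, proving the statement.

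The hard part will be choosing the payoffs so that the two hypotheses coexist with a \emph{blocked} upward move at the intermediate state. Local non-cooperativeness insists $u(1,[b+1])>u(0,[b])$ everywhere, which pushes each state to defect one step further; to halt this cascade at $[b]$ I need the cooperative payoff there to dominate the defective payoff of the \emph{first equilibrium reachable above} $[b]$, i.e. $u(0,[b])\ge u(1,[e])$ with $e\ge b+2$. These are jointly satisfiable only in the window $u(1,[e])\le u(0,[b])<u(1,[b+1])$, which by monotonicity demands $e>b+1$; in particular $[n-1]$ can never be an asymmetric equilibrium, since its cooperator always escapes directly to $[n]$. With $n=3$ the only room is for $[1]$ to (potentially) escape past the non-equilibrium $[2]$ all the way to $[3]$, so the monotone decay from $u(1,[2])=6$ down to $u(1,[3])=4$ must open a gap wide enough to seat $u(0,[1])=5$ while keeping $u(0,[1])<u(1,[2])$. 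Making this single chain of inequalities hold together with every monotonicity and local-non-cooperativeness constraint is the one genuinely delicate point; once the numbers are fixed, the remainder is routine bookkeeping with the established lemmas.
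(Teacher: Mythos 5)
Your proposal is correct and follows essentially the same route as the paper's own proof: exhibit a concrete three-player instance satisfying both hypotheses in which the cascade of defections is blocked at $[1]$ because a cooperator who defects there ends up as a defector at $[3]$ with a strictly lower limit-of-means payoff, making $[1]$ an asymmetric equilibrium alongside the symmetric equilibrium $[3]$. The only difference is your choice of numbers (the paper uses $u(0,[1])=3$ versus $u(1,[3])=1.5$ where you use $5$ versus $4$), which is immaterial.
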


\begin{proof}
    Consider a three player, locally non-cooperative instance of $G^*$ that has monotonously decreasing payoff function defined as below.

    % Please add the following required packages to your document preamble:

\begin{table}[h]
\centering
\begin{tabular}{llcccc}
                                                      &                                    & \multicolumn{4}{c}{\textbf{State}}                                                                                              \\ \cline{2-6} 
\multicolumn{1}{l|}{}                                 & \multicolumn{1}{l|}{}              & \multicolumn{1}{c|}{{[}0{]}} & \multicolumn{1}{c|}{{[}1{]}} & \multicolumn{1}{c|}{{[}2{]}} & \multicolumn{1}{c|}{{[}3{]}}       \\ \cline{2-6} 
\multicolumn{1}{l|}{\multirow{2}{*}{\textbf{Action}}} & \multicolumn{1}{l|}{0 (Cooperate)} & \multicolumn{1}{c|}{6}       & \multicolumn{1}{c|}{3}       & \multicolumn{1}{c|}{1}       & \multicolumn{1}{c|}{-}             \\ \cline{2-6} 
\multicolumn{1}{l|}{}                                 & \multicolumn{1}{l|}{1 (Defect)}    & \multicolumn{1}{c|}{-}       & \multicolumn{1}{c|}{7}       & \multicolumn{1}{c|}{5}       & \multicolumn{1}{c|}{1.5} \\ \cline{2-6} 
\end{tabular}
\end{table}

By Lemma 3.11, we know $[3] \rightarrow [3]$ and by lemma 3.12, we know $[2] \rightarrow [3]$. 

\vspace{0.2cm}

\begin{figure}[h]
\centering
    \begin{tikzpicture}
  [scale=.8,auto=left,every node/.style={draw, circle, thin}]
  
  \tikzset{edge/.style = {->,> = latex'}}
  \node[bag] (n1) at (8,12) {[0]\\ (6, 6, 6)};
  \node[bag] (n2) at (12,10)  {[1]\\ (3, 3, 7)};
  \node[bag] (n3) at (8,8) {[2]\\ (1, 5, 5)};
  \node[bag] (n4) at (4,10) {[3]\\ ($\frac{3}{2}$, $\frac{3}{2}$, $\frac{3}{2}$)};

  \draw[edge] (n3)  to[bend left] (n4);
  % \draw[edge] (n1)  to[loop above] (n1);
  \draw[edge] (n4)  to[loop left] (n4);
\end{tikzpicture}
\end{figure}
    
Now let's consider whether it would be beneficial for an cooperative or defective agent at $[1]$ to induce a state switch. First, observe that suppose an cooperative agent at $[1]$ switches to defective. Such action will induce the game to reach $[2]$ and then $[3]$ in the infinitely many rounds afterward since $[2] \rightarrow [3]$ and $[3] \rightarrow [3]$. Therefore cooperative agents at $[1]$ will not induce state switch, and $[1] \not \rightarrow [2]$. On the other hand, the defective agents at $[1]$ already achieve the highest single round payoff possible: 7. Because the cooperative agents will not induce state switch at $[1]$, if the defective agent remain idle, they can maintain their current payoff of $7$. Thus $[1] \rightarrow [1]$.

Finally, for a cooperative agent at $[0]$, if they choose to induce state switch from $[0]$ to $[1]$, they can increase their equilibrium payoff from 6 to 7, which means $[0] \rightarrow [1]$. We can now complete the state leading diagram.

\begin{figure}[h]
\centering
    \begin{tikzpicture}
  [scale=.8,auto=left,every node/.style={draw, circle, thin}]
  
  \tikzset{edge/.style = {->,> = latex'}}
  \node[bag] (n1) at (8,12) {[0]\\ (6, 6, 6)};
  \node[bag] (n2) at (12,10)  {[1]\\ (3, 3, 7)};
  \node[bag] (n3) at (8,8) {[2]\\ (1, 5, 5)};
  \node[bag] (n4) at (4,10) {[3]\\ ($\frac{3}{2}$, $\frac{3}{2}$, $\frac{3}{2}$)};

  \draw[edge] (n3)  to[bend left] (n4);
  \draw[edge] (n1)  to[bend left] (n2);
  \draw[edge] (n2)  to[loop right] (n2);
  \draw[edge] (n4)  to[loop left] (n4);
\end{tikzpicture}
\end{figure}    

The equilibrium states in this example are $[3]$ and $[1]$, which is a mixture of symmetric and asymmetric states. 

\end{proof}

\section{Open Problem}

% The 

% Without repetition, every locally non-cooperative prisoners' dilemma only produces the defective symmetric nash equilibrium. In this paper, we showed that the supergame of symmetric multiplayers' dilemma can contain asymmetric equilibria under the state leading solution concept. 

One obvious limitation of the proposed solution concept is that it only allows one agent to switch strategy at any given round. In this paper, this condition is necessary to proving each state leads to exactly one state (Theorem 3.7) which is critical to every subsequent theorem. However, we suspect the ``one agent at a time" rule might not be a necessary condition to prove Theorem 3.7. Thus, proving Theorem 3.7 with weakened assumption about number of strategy switches per round is likely necessary for generalizing the results in this paper.

\bibliographystyle{plain} % We choose the "plain" reference style
\bibliography{refs} % Entries are in the refs.bib file

\end{document}